\documentclass[pra,floatfix,amsmath,superscriptaddress,tightenlines,twocolumn,nofootinbib]{revtex4-1}
\usepackage{amsmath, mathtools,amssymb,bbm,amsthm}
\usepackage{tikz}
\usepackage{graphicx}
\usepackage{soul}
\usepackage{float}
\usepackage{mathtools}
\usepackage{braket}
\usepackage{booktabs}
\usepackage{algorithmicx}
\usepackage{jeffe}
\usepackage{algorithm}
\usepackage[noend]{algpseudocode}
\usepackage[pdftex, pdftitle={Article}, pdfauthor={Author}, linktocpage]{hyperref}
\usepackage[]{qcircuit}

\usepackage{etoolbox}

\newcommand{\Tr}{\text{Tr}}

\newcommand{\cT}{\mathcal{T}}
\newcommand{\cU}{\mathcal{U}}
\newcommand{\cH}{\mathcal{H}}

\newcommand{\cI}{\mathcal{I}}

\newcommand{\cE}{\mathcal{E}}

\newtheorem{definition}{Definition}
\newtheorem{theorem}{Theorem}[section]
\newtheorem{corollary}[theorem]{Corollary}
\newtheorem{lemma}{Lemma}
\newtheorem{prop}{Proposition}

\begin{document}
\title{Recycling qubits in near-term quantum computers}
	
\author{Galit Anikeeva}
\affiliation{Department of Physics, Stanford University, Stanford, CA 94305, USA}
\author{Isaac H. Kim}
\affiliation{Centre for Engineered Quantum Systems, School of Physics, University of Sydney, Sydney, NSW 2006, Australia}
\affiliation{Stanford Institute for Theoretical Physics, Stanford University, Stanford, CA 94305, USA}
\author{Patrick Hayden}
\affiliation{Stanford Institute for Theoretical Physics, Stanford University, Stanford, CA 94305, USA}
\date{\today}

\begin{abstract}
Quantum computers are capable of efficiently contracting unitary tensor networks, a task that is likely to remain difficult for classical computers. For instance, networks based on matrix product states or the multi-scale entanglement renormalization ansatz (MERA) can be contracted on a small quantum computer to aid the simulation of a large quantum system. However, without the ability to selectively reset qubits, the associated spatial cost can be exorbitant. In this paper, we propose a protocol that can unitarily reset qubits when the circuit has a common convolutional form, thus dramatically reducing the spatial cost for implementing the contraction algorithm on general near-term quantum computers. This protocol generates fresh qubits from used ones by partially applying the time-reversed quantum circuit over qubits that are no longer in use. In the absence of noise, we prove that the state of a subset of these qubits becomes $|0\ldots 0\rangle$, up to an error exponentially small in the number of gates applied. We also provide a numerical evidence that the protocol works in the presence of noise, and formulate a condition under which the noise-resilience follows rigorously.
\end{abstract}

\maketitle

\section{Introduction}

A quarter century after the discovery that quantum computers could efficiently solve classically intractable  problems~\cite{Shor1997,Lloyd1996}, there is now a worldwide academic and industrial effort to build machines realizing that vision. An important milestone was passed last year with Google's experimental demonstration of quantum supremacy using a 54 qubit processor~\cite{Arute2019}. While the quantum supremacy experiment demonstrated that there are \emph{some} classically intractable tasks that a quantum computer can efficiently perform, it is still unclear if a speedup can be realized in the near term for any practical problems of interest. One promising approach is to pursue variational quantum algorithms~\cite{Peruzzo2014,Farhi2014}. These algorithms use a quantum computer as a device that can prepare quantum states, parametrized by a circuit of unitary operators. By variationally minimizing the energy of a physical system of interest, one can prepare a low-energy state of a quantum many-body system. If the states prepared in this minimization procedure are difficult to simulate on a classical computer, one may hope to obtain a quantum advantage in finding and characterizing the low-energy states. 

However, to achieve a practical advantage, one needs to overcome several challenges. For generic circuits, the gradient decays exponentially with circuit depth~\cite{McClean2018}, making gradient-based optimization challenging. Moreover, realistic devices are noisy and, as such, any intermediate or final outcomes obtained from these experiments will be noisy as well. These limitations suggest that, in order to use variational algorithms for practical purposes, one must use a structured circuit such as the ones introduced in Ref.~\cite{Wecker2015,Kim2017,Kim2017b,Cong2019,Bausch2006}.

The family of \textit{convolutional quantum circuits}~\cite{poulin2009quantum}, in which a sliding active window moves over the qubits of the circuit, is particularly promising. The family includes circuits  referred to in previous literature as \emph{holographic quantum circuits}~\cite{Kim2017,Kim2017a,Kim2017b,Borregaard2019,Barratt2020,Olund2020,Foss-Feig2020}), which can be used to realize versions of the matrix product state~\cite{cirac2020matrix}, projected entangled pair states~\cite{peps}, and MERA~\cite{Vidal2007}. Holographic circuits have a number of desirable properties. First, there are theoretical arguments for the faithfulness of the circuits in describing physical states of interest~\cite{Kim2017,Haegeman2018}. Second, generically, the expectation values of the local observables obtained from these circuits are resilient to noise~\cite{Kim2017a,Kim2017b}. Third, these circuits often lead to a reduced spatial cost. For instance, a quantum many-body system in $D$ spatial dimensions can be simulated using a set of qubits arranged on a $(D-1)$-dimensional lattice. Therefore, the existing quantum computing architectures, which are either one- or two-dimensional, can simulate two- and three-dimensional quantum many-body systems. If the target system one intends to simulate has a volume of $\ell^D$, the number of qubits one would need to carry out such a simulation would scale as $\mathcal{O}(\ell^{D-1})$. Indeed, a recent experiment demonstrates the success of this approach for $D=1$, giving hope that these methods can become useful with a continued improvement in quantum computing technology~\cite{Barratt2020,Foss-Feig2020}. 

Despite their promise, convolutional quantum circuits have not been widely used on existing quantum computing platforms. This is mainly because they require a source of fresh $\ket{0}$ qubits to realize their advantages, but resetting qubits in the middle of a computation can be demanding. For example, the Sycamore processor which Google used in their quantum supremacy experiment does not have the ability to quickly reset a specific qubit~\cite{Arute2019}. Without that capability, the size of the system one can simulate is limited by the number of qubits present in the quantum computer. For instance, even if one uses convolutional quantum circuits for simulating a $\ell \times \ell$ system, without the ability to reset, one would still need a quantum computer consisting of $\mathcal{O}(\ell^2)$ qubits. In contrast, with the ability to reset, one can reduce this qubit requirement to $\mathcal{O}(\ell)$. For instance, if $\ell=10$, the requirement on the number of qubits can shrink from hundreds to tens.

Since having the reset capability can dramatically reduce the number of physical qubits required, it is clearly of interest to implement it in some form. This article provides a software-based reset protocol suitable for systems in which the hardware capability is either absent or too slow. Our procedure, which we refer to as the \emph{rewinding protocol}, unitarily converts a set of qubits used in the computation to a fixed state, after which they can be reused in the remaining part of the computation. Because our protocol is unitary, the protocol can be implemented on any quantum computer, even if it is not equipped with a physical reset operation. 

Importantly, we can provide a rigorous theoretical guarantee on the quality of the reset qubits. The infidelity of the reset qubits decays exponentially with the length of the protocol. Therefore, resetting a qubit up to a fixed target error tolerance takes time logarithmic in the inverse error. 

While other strategies exist for reusing qubits, they have limitations that make them unsuitable for our purpose. As stated before, in certain experimental platforms, qubit reset is unavailable. Alternatively, one could resort to algorithmic cooling~\cite{Fernandez2004,Schulman2005}. However, algorithmic cooling can fail if the state of the qubits has off-diagonal components in the computational basis or if the qubits are correlated. Neither of these possibilities can be ruled out in our setup. 

In contrast, our protocol enjoys the following properties. First, our protocol is unitary. Therefore, it can be implemented on any quantum computing platform. Second, it successfully produces fresh qubits even in a regime where algorithmic cooling fails. Third, our protocol is easy to implement. The protocol involves partially applying the time-reversed quantum circuit over the qubits that are no longer in use, a capability that any reasonable quantum computer is likely to possess. There is no need for measurement, reset or new unitary gates that are incompatible with the given hardware. Lastly, we numerically demonstrate that our protocol works in the presence of noise. While we do not have a theoretical understanding of this phenomenon, our numerical evidence seems compelling.

The rest of this paper is structured as follows. In Section~\ref{sec:protocol}, we introduce our protocol. In Section~\ref{sec:nrqc}, we provide rigorous performance guarantee of our protocol, demonstrating its robustness to noise in Section~\ref{sec:discussion}.  In Section~\ref{sec:boost}, we describe and analyze a method for using our protocol to effectively boost the number of qubits available for implementing a convolutional circuit. We conclude with a discussion in Section~\ref{sec:discussion}.

\section{Rewinding protocol}
\label{sec:protocol}
In this section, we introduce the rewinding protocol. Without loss of generality, a general quantum computation can be thought of as a sequence of gates applied to the $|0^n\rangle$ state, where $n$ is the number of qubits. We shall refer to this sequence as 
\begin{equation}
    \mathcal{C} = (g_1, g_2, \ldots, g_{N-1}, g_N),
\end{equation}
where $g_i$ belongs to some universal gate set for every $i$. This circuit implements a unitary
\begin{equation}
    U(\mathcal{C}) := g_N g_{N-1}\cdots g_2g_1.
\end{equation}

In the quantum circuit, some of the qubits may no longer participate in the computation after a certain point. Specifically, we may be able to break up $\mathcal{C}$ into 
\begin{equation}
\begin{aligned}
    \mathcal{C}_1 &= (g_1, \ldots, g_{k-1}), \\
    \mathcal{C}_2 &= (g_k, \ldots, g_N),
\end{aligned}
\end{equation}
such that $\mathcal{C}_2$ acts trivially on a subset of qubits. For such qubits, we will say that they are \emph{idle} during $\mathcal{C}_2$.

The main goal of our protocol is to convert some of these idle qubits to $|0^m\rangle$ for some $m>0$. We propose to achieve this by implementing the following ``rewinding" circuit. Let $\mathcal{Q}_I$ be the set of idle qubits. Consider a subsequence of $\mathcal{C}_1$ consisting of gates that act exclusively on $\mathcal{Q}_I$. Let us refer to this circuit as 
\begin{equation}
    \mathcal{C}_{1,I} := (\tilde{g}_1, \ldots, \tilde{g}_{N'}).
\end{equation}
The rewinding circuit is defined as follows:
\begin{equation}
    \mathcal{R}(\mathcal{C}_{1,I}) := (\tilde{g}_{N'}^{\dagger}, \ldots, \tilde{g}_1^{\dagger}).
\end{equation}
After applying the rewinding circuit (see Fig.~\ref{fig:rewinding_general} for an example), we obtain the following state:
\begin{equation}
    \mathcal{R}(\mathcal{C}_{1,I}) U(\mathcal{C})|0\ldots 0\rangle.
\end{equation}
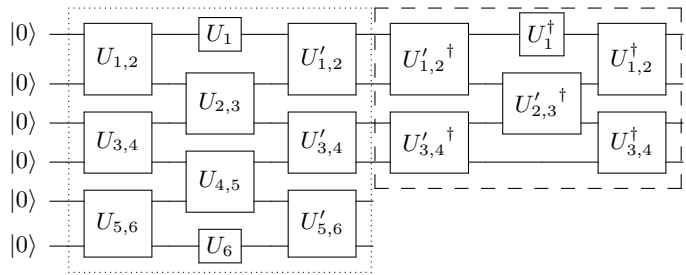
\begin{figure*}[t]
    \centering
    $
     \Qcircuit @C=.7em @R=.7em {
     \lstick{\ket{0}} & \qw & \multigate{1}{U_{1,2}} &\qw &\gate{U_1} & \qw &\multigate{1}{U_{1,2}'} &\qw &\multigate{1}{{U_{1,2}'}^{\dagger}}  & \qw &\gate{U_1^{\dagger}} &\multigate{1}{U_{1,2}^{\dagger}} &\qw\\
     \lstick{\ket{0}} & \qw & \ghost{U_{1,2}} & \qw &\multigate{1}{U_{2,3}} &\qw &\ghost{U_{1,2}'} &\qw &\ghost{{U_{1,2}'}^{\dagger}}  &\qw & \multigate{1}{{U_{2,3}'}^{\dagger}} & \ghost{U_{1,2}^{\dagger}} &\qw\\
     \lstick{\ket{0}} & \qw & \multigate{1}{U_{3,4}} & \qw & \ghost{U_{2,3}} &\qw &\multigate{1}{U_{3,4}'} &\qw &\multigate{1}{{U_{3,4}'}^{\dagger}} &\qw &\ghost{{U_{2,3}'}^{\dagger}} & \multigate{1}{U_{3,4}^{\dagger}} &\qw\\
     \lstick{\ket{0}} & \qw & \ghost{U_{3,4}} & \qw & \multigate{1}{U_{4,5}} &\qw &\ghost{U_{3,4}'} &\qw &\ghost{{U_{3,4}'}^{\dagger}} &\qw &\qw  & \ghost{U_{3,4}^{\dagger}} &\qw \\
     \lstick{\ket{0}} & \qw & \multigate{1}{U_{5,6}} & \qw &\ghost{U_{4,5}} &\qw &\multigate{1}{U_{5,6}'} &\qw \\
     \lstick{\ket{0}} & \qw & \ghost{U_{5,6}} &\qw & \gate{U_6} &\qw &\ghost{U_{5,6}'} &\qw
     \gategroup{1}{9}{4}{12}{1.25em}{--}
     \gategroup{1}{3}{6}{7}{1.25em}{.}
     }
     $
    \caption{An example of the rewinding circuit. Here the first $4$ qubits are assumed to be idle. The subcircuit in the dashed box is the rewinding circuit of the circuit in dotted box.}
    \label{fig:rewinding_general}
\end{figure*}

An important question is whether the state of the idle qubits becomes close to the $|0\ldots 0\rangle_I$ state in some sense. We formulate a sufficient condition under which this is possible. Of course, generally speaking, we should not expect this state to be close to $|0\ldots 0\rangle$. However, for a special class of circuits called \emph{convolutional circuits}~\cite{Kim2017,Kim2017a,Kim2017b}, one can guarantee this generically.

\section{Convolutional circuits}
\label{sec:nrqc}

In this section, we study how well the rewinding protocol works for convolutional circuits~\cite{poulin2009quantum}. These circuits have a fixed-size active window that moves over the qubits of the circuit. More formally, a convolutional circuit is one which can be written in the following form:
\begin{equation}
    U_{[n-k-1, n]}\cdots U_{[2,k+1]}U_{[1,k]},
\end{equation}
over $n$ qubits, where $U_{[i,j]}$ is a unitary operator acting on a set of qubits ranging from the $i$'th to the $j$'th index. Here $k$ is a constant that is often chosen to be $\mathcal{O}(1)$.

Convolutional circuits play an important role in near-term quantum algorithms because they naturally appear in quantum circuits that can approximately prepare physical ground states of interest~\cite{Kim2017,Kim2017a,Kim2017b,Haegeman2018}; see Appendix~\ref{sec:convolutional_circuits_examples} for a review. Specifically, for the purpose of evaluating the expectation values of local observables, which is a key subroutine used in many near-term quantum algorithms, one can compress those circuits into a convolutional circuit. Without loss of generality, suppose our goal is to evaluate
\begin{equation}
    \langle \psi| O |\psi\rangle
\end{equation}
where $|\psi\rangle$ is the state prepared by the circuits in Ref.~\cite{Kim2017,Kim2017a,Kim2017b}. One can estimate this expectation value by preparing a new state $|\psi_O\rangle$, which can be expressed as
\begin{equation}
    |\psi_O\rangle = U_{(n-1),n}\cdots U_{2,3}\,U_{1,2}|0\ldots 0\rangle, \label{eq:compressed_circuit}
\end{equation}
where $U_{i,i+1}$ represents a unitary that acts on two groups of qubits, labeled by $i$ and $i+1$~\cite{Borregaard2019}, and the support of the observable $O$ lies in the union of the  last two groups of qubits; see Fig.~\ref{fig:nrqc_ex} for an illustration.

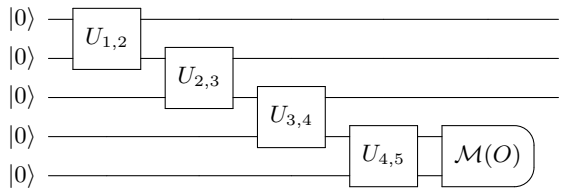
\begin{figure}[h]
\centering
$
    \Qcircuit @C=1em @R=.7em {
\lstick{\ket{0}}& \multigate{1}{U_{1,2}} & \qw & \qw & \qw & \qw &\qw \\
\lstick{\ket{0}}& \ghost{U_{1,2}}& \multigate{1}{U_{2,3}} &\qw &\qw & \qw &\qw\\
\lstick{\ket{0}}& \qw & \ghost{U_{2,3}} & \multigate{1}{U_{3,4}} &\qw &\qw &\qw \\
\lstick{\ket{0}}& \qw & \qw & \ghost{U_{3,4}} & \multigate{1}{U_{4,5}} &  \multimeasureD{1}{\mathcal{M}(O)} \\
\lstick{\ket{0}}& \qw & \qw & \qw & \ghost{U_{3,4}} &\ghost{\mathcal{M}(O)}\\
}
$\caption{A circuit diagram representing the state preparation of Eq.~\eqref{eq:compressed_circuit} for $n=5$ followed by the measurement of an observable $O$. 
%A generalization to arbitrary $n$ is straightforward.
}
    \label{fig:nrqc_ex}
\end{figure}

\begin{figure*}[t]
    \centering
    $
    \Qcircuit @C=1em @R=.7em {
\lstick{\ket{0}}& \multigate{1}{U_{1,2}} & \push{\ket{0}} & \qswap & \multigate{1}{U_{2,3}} &\push{\ket{0}} & \qswap &\qw &\cdots &\push{\ket{0}} & \qswap&\qw & \multigate{1}{U_{n-1, n}} & \qw & \multimeasureD{1}{\mathcal{M}(O)}\\
\lstick{\ket{0}}& \ghost{U_{1,2}} & \qw & \qswap \qwx & \ghost{U_{1,2}} &\qw & \qswap \qwx & \qw & \cdots &  &\qswap \qwx & \qw & \ghost{U_{n-1,n}} & \qw & \ghost{\mathcal{M}(O)}
}
$
    \caption{A space-efficient implementation of Eq.~\eqref{eq:compressed_circuit}. Compared to Fig.~\ref{fig:nrqc_ex}, the number of requisite qubits is reduced to a constant.}
    \label{fig:compressed_further}
\end{figure*}
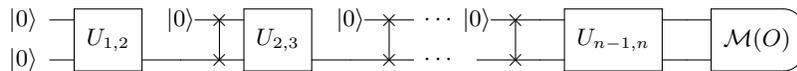

An important property of convolutional circuits is that the state preparation of Eq.~\eqref{eq:compressed_circuit} can be implemented using a constant number of qubits (independent of $n$), provided that one can reset the qubits at will; see Fig.~\ref{fig:compressed_further}.
However, as explained above, the experimental implementation of reset remains challenging, which motivates the search for alternatives. By applying the rewinding protocol to the preparation circuit truncated after the gate $U_{m-1,m}$, we obtain the circuit diagram in Fig.~\ref{fig:nrqc_recycled}. The rewound circuit prepares the same state on qubit $m$ as the original. In addition, though, the first few qubits are returned to the $|0\rangle$ state up to an error exponentially small in $m$: 
\begin{equation}
    \langle 0^k| \rho_{[1,k]} |0^k\rangle  \geq 1- c\exp(-\alpha(m-k)) \label{eq:error_decay}
\end{equation}
where $c$ and $\alpha$ are numerical constants. For a sufficiently large $m$, the error becomes negligible. Therefore, one can inject these freshly reset qubits into the remainder of the preparation circuit.

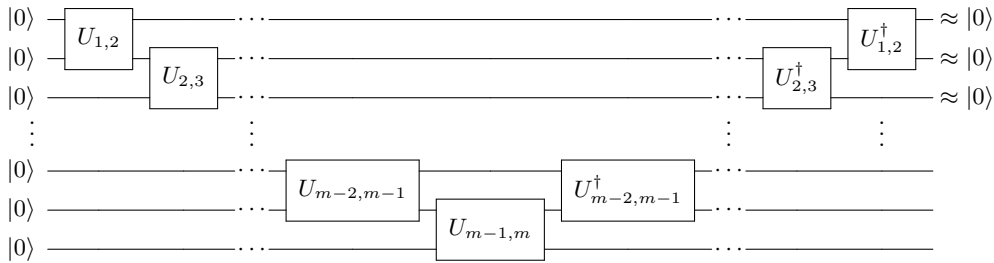
\begin{figure*}[t]
\centering
$
    \Qcircuit @C=.7em @R=.7em {
\lstick{\ket{0}}& \multigate{1}{U_{1,2}} & \qw & \qw & \cdots &  &\qw & \qw &\qw & \qw &\cdots &  &  \qw  & \multigate{1}{U_{1,2}^{\dagger}} &\qw &  &\approx \ket{0}\\
\lstick{\ket{0}}& \ghost{U_{1,2}}& \multigate{1}{U_{2,3}} &\qw &\cdots &  &\qw & \qw &\qw & \qw &\cdots & & \multigate{1}{U_{2,3}^{\dagger}} &\ghost{{U_{1,2}^{\dagger}}}&\qw &  &\approx \ket{0}\\
\lstick{\ket{0}}& \qw & \ghost{U_{2,3}} & \qw & \cdots & &\qw & \qw &\qw & \qw &\cdots &  & \ghost{{U_{2,3}^{\dagger}}} &\qw&\qw &  &\approx \ket{0}\\
\lstick{\vdots}&  &  &  & \vdots  & & & & & & \vdots & & & \vdots & & & \\
&  &  &  &  &  & & & & & & & & & & &\\
\lstick{\ket{0}}& \qw & \qw & \qw & \cdots &  & \multigate{1}{U_{m-2, m-1}} &\qw &\multigate{1}{U_{m-2,m-1}^{\dagger}} &\qw &\cdots & &\qw & \qw & \qw \\
\lstick{\ket{0}}& \qw & \qw & \qw & \cdots &  & \ghost{U_{m-2,m-1}} & \multigate{1}{U_{m-1,m}} &\ghost{U_{m-2,m-1}^{\dagger}} &\qw & \cdots & &\qw & \qw & \qw\\
\lstick{\ket{0}}& \qw & \qw & \qw & \cdots &  & \qw & \ghost{U_{m-1,m}} &\qw & \qw & \cdots & &\qw & \qw & \qw\\
}
$\caption{Applying the rewinding protocol (see Section~\ref{sec:protocol}) to Eq.~\eqref{sec:protocol}.}
    \label{fig:nrqc_recycled}
\end{figure*}

The rest of this section will be devoted to explaining why Eq.~\eqref{eq:error_decay} is true. A sketch of the proof will be presented in Section~\ref{sec:nrqc_proof}.
%, with the rigorous demonstration in Appendix~\ref{appendix:transfer_matrix}. 
We also report on a numerical experiment that corroborates the proof in Section~\ref{sec:nrqc_exp}, focusing on extracting a precise estimate on the average number of iterations one needs to bound the fidelity to a level that is available in realistic devices, namely $0.01$. 

\subsection{Fidelity guarantee}
\label{sec:nrqc_proof}
In this section, we sketch the proof of Eq.~\eqref{eq:error_decay}, deferring the details to Appendix~\ref{appendix:transfer_matrix}. To explain the main idea, it will be convenient to replace the circuit diagram of Fig.~\ref{fig:nrqc_recycled} with a ``process'' diagram in which the unitary gate $U_{i,i+1}$ is replaced by a superoperator $\mathcal{U}_{i}(\cdot) = U_{i,i+1}(\cdot) U_{i,i+1}^{\dagger}$. For simplicity, suppose we trace out all but the first qubit; see Fig.~\ref{fig:nrqc_recycled_process}.

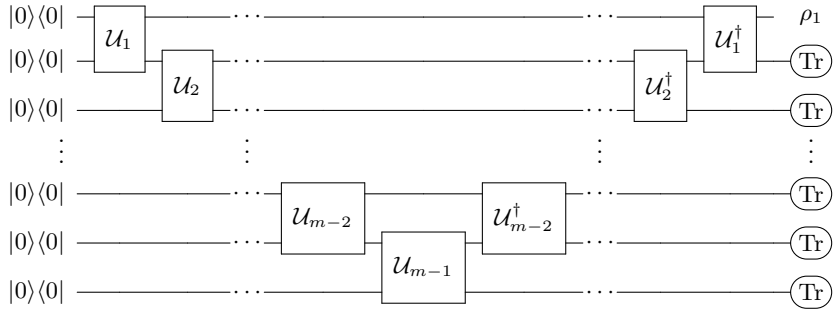
\begin{figure*}[t]
\centering
    $\Qcircuit @C=.7em @R=.7em {
\lstick{|0\rangle\langle0|}& \multigate{1}{\mathcal{U}_1} & \qw & \qw & \cdots &  &\qw & \qw &\qw & \qw &\cdots &  &  \qw  & \multigate{1}{\mathcal{U}_1^{\dagger}} &\qw & \rho_1 \\
\lstick{|0\rangle\langle0|}& \ghost{\mathcal{U}_1}& \multigate{1}{\mathcal{U}_2} &\qw &\cdots &  &\qw & \qw &\qw & \qw &\cdots & & \multigate{1}{\mathcal{U}_2^{\dagger}}  &\ghost{{\mathcal{U}_1^{\dagger}}}&\qw &\measure{\text{Tr}}\\
\lstick{|0\rangle\langle0|}& \qw & \ghost{\mathcal{U}_2} & \qw & \cdots & &\qw & \qw &\qw & \qw &\cdots &  & \ghost{{\mathcal{U}_2^{\dagger}}} &\qw&\qw &\measure{\text{Tr}}\\
\lstick{\vdots}&  &  &  & \vdots  & & & & & & \vdots & & & & & \vdots \\
&  &  &  &  &  & & & & & & & & & &\\
\lstick{|0\rangle\langle0|}& \qw & \qw & \qw & \cdots &  & \multigate{1}{\mathcal{U}_{m-2}} &\qw &\multigate{1}{\mathcal{U}_{m-2}^{\dagger}} &\qw &\cdots & &\qw & \qw & \qw &\measure{\text{Tr}}\\
\lstick{|0\rangle\langle0|}& \qw & \qw & \qw & \cdots &  & \ghost{\mathcal{U}_{m-2}} & \multigate{1}{\mathcal{U}_{m-1}} &\ghost{\mathcal{U}_{m-2}^{\dagger}} &\qw & \cdots & &\qw & \qw & \qw &\measure{\text{Tr}}\\
\lstick{|0\rangle\langle0|}& \qw & \qw & \qw & \cdots &  & \qw & \ghost{\mathcal{U}_{m-1}} &\qw & \qw & \cdots & &\qw & \qw & \qw &\measure{\text{Tr}}\\
}$
\caption{Circuit in Fig.~\ref{fig:nrqc_recycled} in the density matrix picture. Here $\mathcal{U}_i$ is a quantum channel that applies a unitary gate $U_{i,i+1}$. After tracing out all but the first qubit, we obtain a density matrix $\rho_1$.}
    \label{fig:nrqc_recycled_process}
\end{figure*}

Our argument will be directed at analyzing \emph{transfer operators} associated to the circuit, which will describe how portions of the rewinding circuit affect the rest. Consider a sequence of quantum channels $(\Phi_{i})$ for $i=1,\ldots, m$, whose action on a state $\rho$ is defined as follows.
\begin{equation}    
\begin{aligned}
\begin{array}{c}
\Qcircuit @C=.7em @R=.7em {
     \rho &  & \multigate{1}{\mathcal{U}_{i}} &\qw & \multigate{1}{\mathcal{U}_i^{\dagger}} &\qw &\Phi_i(\rho) \\  &
    \lstick{|0\rangle\langle 0|} & \ghost{\mathcal{U}_i} &  \gate{\Phi_{i+1}} &\ghost{\mathcal{U}_i^{\dagger}} &\qw & \measure{\text{Tr}}
    }
\end{array}
\end{aligned}
\end{equation}
Because $\Phi_{i+1}$ is a composition of quantum channels, so is $\Phi_i$. The transfer operator $\mathcal{T}_{i+1\to i}$ is the transformation taking the channel $\Phi_{i+1}$ to $\Phi_i$. 

The transfer operator has several useful properties. First, it maps the identity channel to the identity channel. In particular, viewing the transfer operator as a linear operator, the identity channel becomes an eigenvector with an eigenvalue $1$. Moreover, the transfer operator has an eigenvalues with modulus less or equal to $1$. Therefore, generically, by applying the transfer operator sufficiently many times, all the non-identity components decay away. What is left is a quantum channel that is exponentially close to the identity channel. We make this argument more precise in Appendix~\ref{appendix:transfer_matrix}, proving that
\begin{equation}
    \| \Phi_1 - \mathcal{I} \|_{\diamond}= \mathcal{O}(\lambda^{n}), \label{eq:main_result}
\end{equation}
where $\lambda$ is the modulus of the second largest eigenvalue of $\mathcal{T}_{i+1,i}$, maximized over all $i$, and $\|\cdots \|_{\diamond}$ is the diamond norm. In particular, we obtain
\begin{equation}
    \|\rho - \Phi_1(\rho)\|_1 = \mathcal{O}(\lambda^{n}).
\end{equation}

In practice, it could be difficult to estimate the precise value of $\lambda$. Without that knowledge, one cannot provide a quantitative bound on the fidelity of the rewinding process. However, there is a simple method that can overcome this problem. One can simply estimate the fidelity of the recycled state by measuring those qubits in the computational basis. Assuming that we expect the fidelity to be bounded from below by $1-\delta$, to estimate the fidelity up to an error of $\epsilon$, it suffices to take $\sim \delta(1-\delta)/\epsilon^2$ samples.

\subsection{Numerical experiment}
\label{sec:nrqc_exp}

While our proof firmly establishes that the rewinding protocol works for convolutional circuits, Eq.~\eqref{eq:main_result} is a conservative bound. To get a more precise quantitative estimate of the actual fidelity we can typically achieve, we have performed a numerical experiment using the Tensor Network library~\cite{tensornetwork}.

Consider Haar-random gates $U, V \in SU(4)$. We choose the gates in the circuit to be $U_{1,2} = U_{2,3} = \cdots = U_{n-2,n-1} = U$ and $U_{n-1,n}= V$. We applied the rewinding protocol to this circuit, making the length $n=150$. Fig.~\ref{fig:plot_average} presents the median (over $2000$ trials) of the overlap of the $i$-th qubit with $\ket{0}$ for different values of $i$. 

\begin{figure}[h]
\centering
\includegraphics[width=0.9\linewidth]{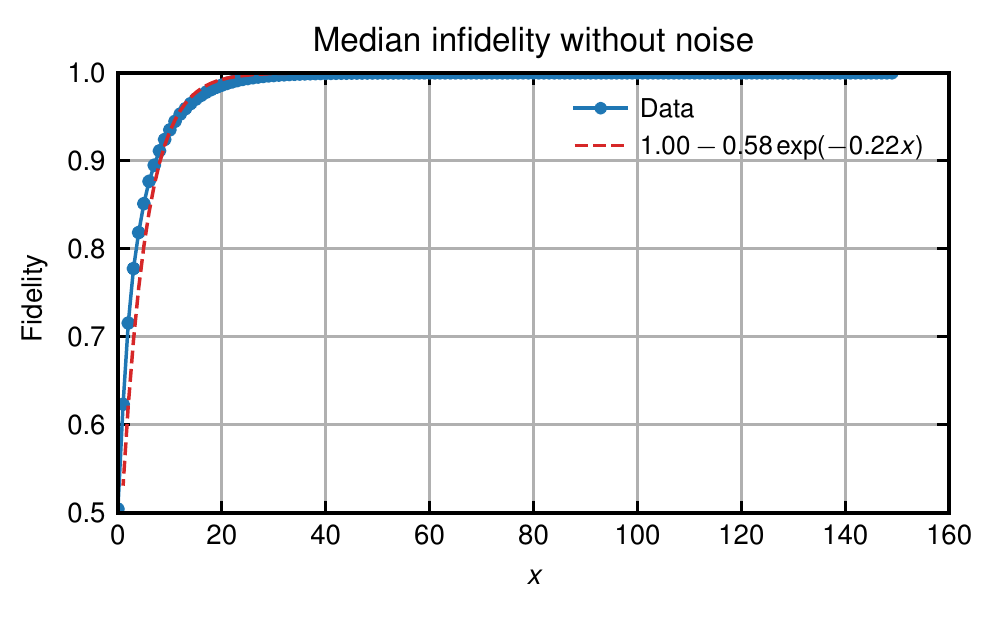}
\includegraphics[width=0.9\linewidth]{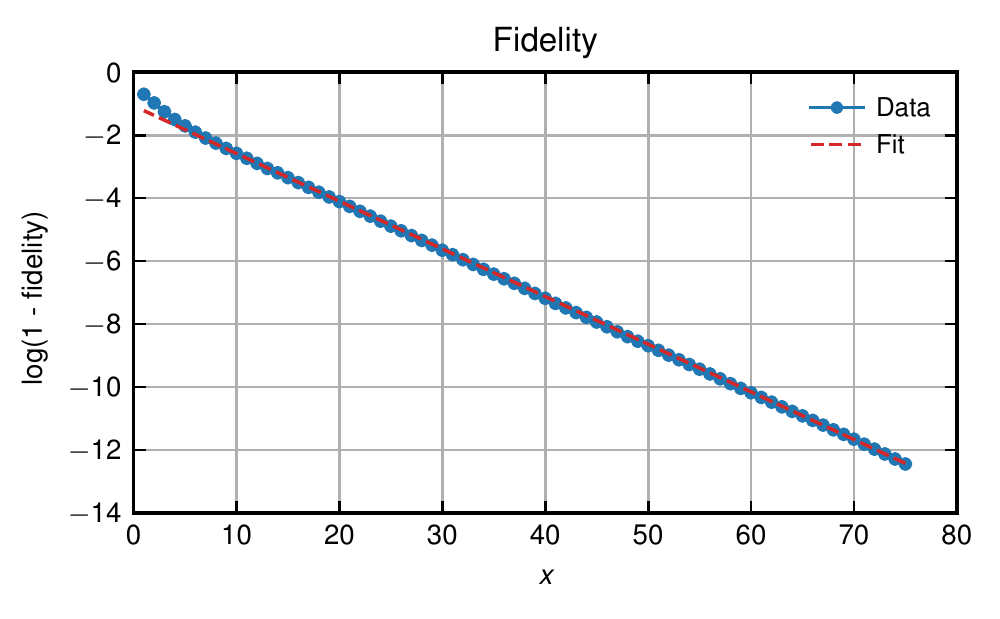}
    \caption{Median log fidelity plots. Here $x$ is $150- i$, where
    $i$ is the index of the qubit as defined above.}
    \label{fig:plot_average}
\end{figure}

As shown in Fig.~\ref{fig:plot_average}, our protocol yields a fidelity converging exponentially to $1$ with an average decay rate of 0.22. Typically, to reach a single-qubit preparation error of $0.01$ and $0.001$, we needed to apply the protocol of length 19 and 29 respectively.

\subsection{Robustness to noise}
\label{sec:noise}
Numerical evidence indicates that the rewinding protocol is also remarkably robust to noise. To model those imperfections, we redid the numerical experiment of Section~\ref{sec:nrqc_exp} but followed every gate with noise, specifically, each qubit being completely depolarized independently with probability $1\%$. The results are summarized in Fig.~\ref{fig:noise}. The fidelity of the qubits being reset converges to a quantity that is close to $1$ even in the presence of noise. In fact, the fidelity converges to a stationary value close to $0.99$, which suggests that the stationary value is $1- \mathcal{O}(p)$, where $p$ is the error probability per gate. Crucially, there is no indication that noise accumulates with increasing circuit depth. Moreover, up to the precision in our numerical experiment, the decay rate is $0.22$, which is the same value obtained \emph{in the absence of noise.} In Appendix~\ref{appendix:noise_resilience}, we have formulated a condition under which this robustness can be explained rigorously.

\begin{figure}[h]
\centering
\includegraphics[width=0.9\linewidth]{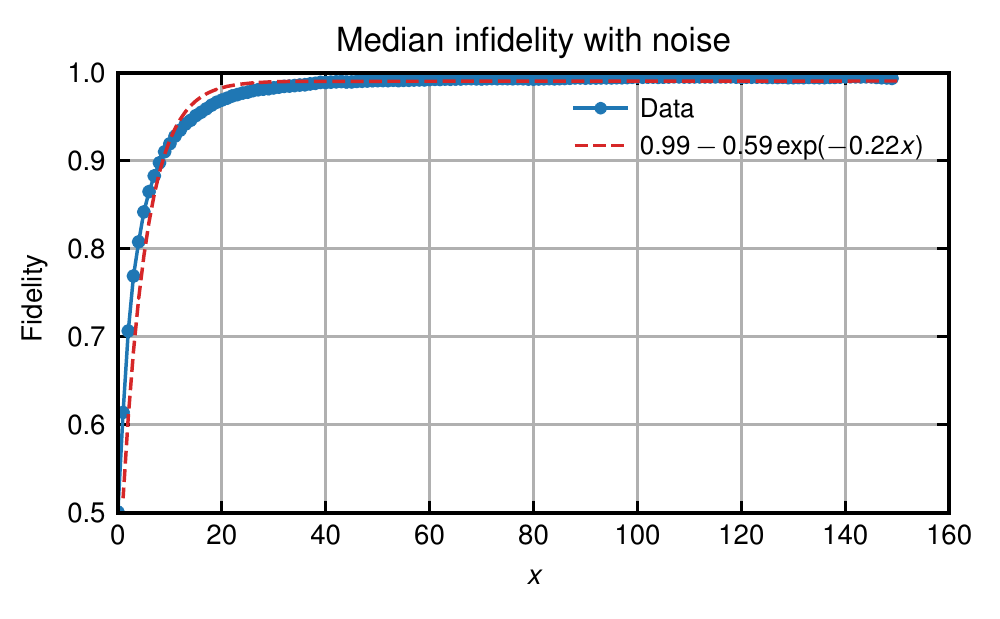}
\includegraphics[width=0.9\linewidth]{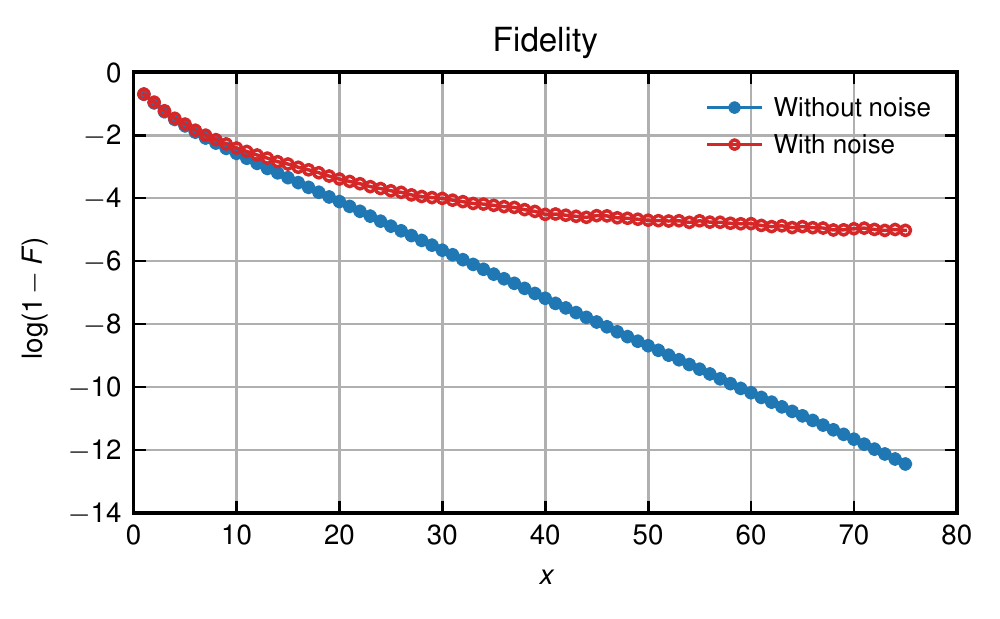}
    \caption{Median fidelity in the presence of noise, with the convention $x=150-i$, where
    $i$ is the index of the qubit as defined above.}
    \label{fig:noise}
\end{figure}

\section{Boosting the number of available qubits}
\label{sec:boost}
By repeatedly using the rewinding protocol, we can obtain an \emph{asymptotic} reduction in the number of qubits needed. Supppose, for concreteness, that we are equipped with an $n_q$ qubit quantum computer but somehow the circuit of interest consists of $n> n_q$ qubits. What should we do? Often, one is interested in estimating the expectation value of the observable up to some error, say $\epsilon$. The main question we address in this section is the size of the circuit one can simulate using $n$ qubits while ensuring that the error is below $\epsilon.$

The idea is to use the rewinding protocol recursively. Suppose, for instance, that we run the circuit only up to the first $n_1=n_q$ qubits and then apply the rewinding protocol, resetting the first $k_1$ qubits. We can feed those $k_1$ qubits in as new ancillae to be used for the remaining part of the circuit, incurring an error bounded by $\epsilon_1 = ce^{-\alpha(n_1-k_1)}$ in doing so. We are once again constrained by space so we apply a circuit over $n_2 = k_1 + 1$ qubits, namely the output of the first portion in addition to the new ancillae, and then apply the rewinding protocol. If we keep the first $k_2$ qubits, we incur an error on these qubits which can be bounded by
\begin{equation}
    \epsilon_2 \leq \epsilon_1 + ce^{-\alpha(n_2 - k_2)}.
\end{equation}
By repeating this $t$ times, the total error is bounded by:
\begin{equation}
    \epsilon_{\text{total}}  \leq \sum_{i=1}^{t}c (t-i+1)e^{-\alpha(k_{i-1} - k_i)},
    \label{eq:main_error_bound}
\end{equation}
where $k_0 = n_q$. The number of qubits effectively available for the original convolutional circuit is therefore $\sum_{i=0}^t k_i$. Since the size of a convolutional quantum circuit is proportional to the number of available qubits, the size is itself $\Theta(\sum_{i=0}^t k_i)$.

Let us choose $k_{i-1}  - k_i$ to be a constant, denoted $\Delta k$. To leading order, the $\Delta k$ that ensures $\epsilon_{\text{total}}$ to be below $\epsilon$ is
\begin{equation}
    \Delta k \approx \frac{\ln \left(t^2/\epsilon \right)  }{\alpha}.
\end{equation}
We can then choose $t\sim \frac{\alpha n_q}{\ln(1/\epsilon) \ln (n_q)}$ to get, to leading order, 
\begin{equation}
    n_{\text{circuit}} \approx \frac{\alpha n_q^2}{2\ln (1/\epsilon) \ln n_q}.
\end{equation}
Therefore, we get a nearly quadratic boost in the size of the circuit we can implement. For $\epsilon=0.01$ and the numerically observed average value of $\alpha=0.22$, we obtain $n_{\text{circuit}} \approx 0.024 n_q^2/ \ln n_q$. The rewinding protocol always increases the number of available qubits but this bound only detects the advantage at $n_q \sim 90$ qubits. A more optimized choice of $k_{i-1} - k_i$ could potentially yield a better bound. 

The preceding analysis allowed for the possibility that the induced error on the recycled qubits could be highly correlated. In practice, the error model might be close to being independent. While we do not have a general proof for such a statement, we have performed a preliminary numerical experiment that supports the hypothesis. If that is really the case, an \emph{extensive} error occurring on every qubit, say of fixed strength $\epsilon$, generically only gives rise to an error of order $\epsilon$ to the final answer~\cite{Kim2017,Kim2017a,Kim2017b}. Understanding to what degree this is true and what the ultimate limit of our protocol is an open problem left for future work.

As a practical matter, it is important to note that the procedure described above is highly parallelizable. There is no reason to wait until a given iteration of the rewinding protocol is complete before recycling qubits. As soon as a sufficiently high-fidelity qubit is prepared, one can use this qubit while running the remainder of the recycling protocol concurrently. The procedure discards the first $\Delta k$ qubits processed by the rewinding protocol, which will impose a corresponding delay at each level of recursion, but the delay is independent of the number of qubits $n_q$, depending only on the quality of the desired recycled qubits. In fact, the delay can even be eliminated by maintaining a buffer of $\Delta k$ fresh qubits. Therefore, the depth of the convolutional circuit implemented using the rewinding protocol to achieve a nearly quadratic savings in spatial cost can be made to be the same as the circuit executed with a physical supply of fresh qubits.

\section{Discussion}
\label{sec:discussion}
In this paper, we proposed a simple protocol for unitarily implementing a reset operation on a qubit. This protocol is particularly useful for reducing the spatial cost in implementing convolutional quantum circuits, which frequently appear in a large family of promising variational quantum circuits that can efficiently prepare physical ground states of interest~\cite{Kim2017,Kim2017a,Kim2017b,Borregaard2019,Barratt2020,Olund2020,Foss-Feig2020}. While implementation of such circuits using a physical reset operation has been carried out already~\cite{Foss-Feig2020}, the reported experimental system consists of a small number of qubits. By using our method, it should be possible to run these circuits on the existing larger quantum computers which are not equipped with physical reset operations.

Recently, Borregaard et al.~\cite{Borregaard2019} pointed out that convolutional circuits can be truncated to shorter circuits, leading to a reduction in spatial (and temporal) cost. Interestingly, the key parameter in their work that determines the savings is different from ours. In their case, the modulus of the second largest eigenvalue of the transfer matrix acting on the space of density matrices determines the optimal depth of the circuit beyond which it does not make sense to run the circuit any further. In our case, the savings are determined by the second largest eigenvalue of the transfer matrix acting on the \emph{space of completely positive trace-preserving maps}; see Appendix~\ref{appendix:transfer_matrix}. Understanding which of these two methods is better in practice is an open problem left for future work. 

There are several directions to improve our method. First, one may be able to devise a method to efficiently certify that the newly prepared ancilla states have large overlaps with the $|0\rangle$ state. One straightforward way to do this is to measure the state of the qubit directly, thus  gathering the statistics. However, this process can predict a lower fidelity because of the measurement error. A method to remove such measurement error in the spirit of randomized benchmarking~\cite{Emerson2005,RB_Knill} would be desirable. Second, one may be able to concatenate our method with the well-known algorithmic cooling method to further improve the purity of the ancilla qubits~\cite{Fernandez2004,Schulman2005}.

Our work focused on convolutional quantum circuits for which the local Hilbert space dimension is bounded. This is an important assumption because the bounds we derived in this paper were in terms of $2$-norm in some vector space. However, in order to convert a bound on this norm to a statement about the fidelity, one generally introduces a dimension-dependent factor. Therefore, if the local Hilbert space dimension is large, this bound may not be so useful. A similar problem was tackled in Ref.~\cite{Kim2017a}, by introducing a more refined notion of mixing called \emph{local mixing}. It will be interesting to understand if there is an analogous physically plausible local mixing condition in our context.

%Lastly, an important problem is to either prove or disprove that our protocol is resilient to noise. 
Lastly, the noise-resilience of our protocol warrants a further study. Specifically, if every gate in the circuit is replaced by an ideal gate followed by a local noise channel, would the fidelity of the prepared qubits be still $1-\mathcal{O}(\epsilon)$ where $\epsilon$ is the noise strength? Alternatively, would the fidelity scale as $1- \mathcal{O}(n\epsilon )$, where $n$ is the circuit depth? Our numerical experiment suggests the former, and our analysis in Appendix~\ref{appendix:noise_resilience} definitively shows that there is a sufficient condition under which one can rigorously prove the noise-resilience of our protocol. Whether this condition holds generically for convolutional quantum circuits is an important open problem that is left for future work.

\section*{Acknowledgments}
We thank Dave Bacon and Stefan Leichenauer for helpful conversations.
GA acknowledges the support of the Stanford Institute for Theoretical Physics and her coauthors. IK was supported by Google, the Simons Foundation It from Qubit Collaboration and by the Australian Research Council via the Centre of Excellence in Engineered Quantum Systems (EQUS) project number CE170100009. PH acknowledges the support of Gooogle, as well as AFOSR (award FA9550-19-1-0369), CIFAR, and the Simons Foundation.

\bibliography{bib}

\appendix

\section{Transfer operator}
\label{appendix:transfer_matrix}

In this appendix, we will prove properties of the transfer operator introduced in Section~\ref{sec:nrqc_proof}. Let us first recall the definition of the transfer operator.  Without loss of generality, let $U: \cH_A\otimes \cH_B \to \cH_A\otimes \cH_B$ be a unitary on two quantum systems, $A$ and $B$. Let $\cU(\rho) = U\rho U^\dagger$ be the corresponding channel. The transfer operator is defined as follows:
\begin{align}
\cT_U[\Phi] = \Tr_{B}\circ \cU^\dagger \circ (\cI_A \otimes \Phi)\circ \cU \circ \cE
\label{eq:transfer_matrix}
\end{align}
where $\cE(\rho) = \rho \otimes \ket{0}_B\bra{0}_B$.

Clearly, $\mathcal{T}_U$ maps quantum channels to quantum channels. It will be helpful to be precise about the domain and range $\mathcal{T}_U$. In order to be able to introduce eigenvectors, it will be convenient to view $\mathcal{T}_U$ as an operator acting on the complex vector space $\Seq{\mathcal{CP}_{\cH}^{TP}}$ spanned by the CPTP maps $\mathcal{CP}_{\cH}^{TP}$. There is a natural inner product $\langle \cdot,\cdot \rangle$ on $\langle\mathcal{CP}_{\cH}^{TP}\rangle$ induced by the Hilbert space inner product on $\mathcal{H}$.

In order to study properties of $\mathcal{T}_U$, it will be convenient to define two norms: the ``Hilbert-Schmidt'' norm over $\Seq{\mathcal{CP}_{\cH}^{TP}}$, which we shall denote as $\|\cdot \|_2$, and an operator norm for the operators that act on $\Seq{\mathcal{CP}_{\cH}^{TP}}$, which we shall denote as $\|\cdot \|$. Formally, these norms are defined as follows.
\begin{equation}
    \|\Phi \|_2 := \sqrt{\langle \Phi, \Phi \rangle}
\end{equation}
and
\begin{equation}
    \|\mathcal{T}\| = \sup_{\Phi \in \Seq{\mathcal{CP}_{\cH}^{TP}}} \frac{\|\mathcal{T}[\Phi] \|_2}{\|\Phi \|_2}.
\end{equation}

As a first application, we first show that $\cT_U$ has eigenvalues with modulus less or equal to $1$.
\begin{theorem}
\label{thm:eigenvalue_bound}
Suppose $U$ is a unitary and $\cT_U$ is defined as in Eq.~\eqref{eq:transfer_matrix}. Let $\Phi
\in \Seq{\mathcal{CP}_{\cH}^{TP}}$ be a right eigenvector of $\cT_U$ with eigenvalue $\lambda$. Then $\Abs{\lambda} \leq 1$.
\end{theorem}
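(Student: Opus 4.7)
The plan is to exploit the fact that $\cT_U$ preserves the convex set of CPTP maps, which is a bounded, spanning subset of the finite-dimensional vector space $\Seq{\mathcal{CP}_{\cH}^{TP}}$; this forces every $\cT_U$-orbit to remain bounded in norm, and the boundedness of the orbit of a non-zero eigenvector then immediately yields $|\lambda|\le 1$.

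First, I would verify that $\cT_U$ carries $\mathcal{CP}_{\cH}^{TP}$ into itself. Inspecting Eq.~\eqref{eq:transfer_matrix}, every constituent is CPTP: the encoding $\cE(\rho)=\rho\otimes\ket{0}_B\bra{0}_B$ (tensoring with a fixed state), the unitary channels $\cU$ and $\cU^\dagger$, the tensor product $\cI_A\otimes\Phi$ whenever $\Phi$ is CPTP, and the partial trace $\Tr_B$. Since a composition of CPTP maps is CPTP, $\cT_U[\Phi]$ is CPTP whenever $\Phi$ is.

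Second, I would observe that the set $\mathcal{CP}_{\cH}^{TP}$ is bounded in $\|\cdot\|_2$ -- for example, the Choi matrices of CPTP maps are positive with fixed trace, so their Hilbert--Schmidt norms are bounded by a single constant $M$. Since $\mathcal{CP}_{\cH}^{TP}$ spans the ambient vector space by definition, one may pick a basis $\{\Phi_1,\dots,\Phi_d\}\subset\mathcal{CP}_{\cH}^{TP}$. Any $\Phi\in\Seq{\mathcal{CP}_{\cH}^{TP}}$ then decomposes as $\Phi=\sum_i c_i\Phi_i$, and combining the first step with the triangle inequality yields
\begin{equation}
\|\cT_U^n[\Phi]\|_2 \;\le\; \sum_i |c_i|\,\|\cT_U^n[\Phi_i]\|_2 \;\le\; M\sum_i|c_i|,
\end{equation}
a bound independent of $n$.

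Applying this to an eigenvector $\Phi\neq 0$ with $\cT_U[\Phi]=\lambda\Phi$ gives $\cT_U^n[\Phi]=\lambda^n\Phi$, so $|\lambda|^n\|\Phi\|_2\le M\sum_i|c_i|$ for every $n$; since $\|\Phi\|_2>0$, this forces $|\lambda|\le 1$. The only step requiring any real care is the CPTP verification in the first paragraph, and even that is routine bookkeeping. I do not anticipate a genuine obstacle: the argument is a standard instance of the finite-dimensional fact that any linear map preserving a bounded, spanning convex set must have spectral radius at most one.
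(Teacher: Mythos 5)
Your proof is correct and follows essentially the same route as the paper's: decompose the eigenvector into a finite combination of CPTP maps, use the fact that $\cT_U$ preserves the (bounded) set of CPTP maps to bound the orbit $\|\cT_U^n[\Phi]\|_2$ uniformly in $n$, and contrast this with the growth $|\lambda|^n\|\Phi\|_2$ to force $|\lambda|\le 1$. Your version is marginally more explicit than the paper's in justifying why $\cT_U$ maps channels to channels and why the set of channels is bounded, but the argument is the same.
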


\begin{proof}
Without loss of generality, let $\Phi = 
\sum_{i\in I} \alpha_i \Psi_i$, for some $\alpha_i \in \C$ and $\Psi_i \in \mathcal{CP}_{\cH}^{TP}$ for some finite index set $I$. The index can be made to be finite because the vector space $\Seq{\mathcal{CP}_{\cH}^{TP}}$ is finite-dimensional. Thus $\cT_U^k[\Phi] = \sum_{i\in I} \alpha_i \cT_U^k[\Psi_i]$ by linearity, where  $\cT_U^k$ is the $k$-th repeated application of $\cT_U$. 

After applying $\cT_U$ $k$ times, we obtain:
\begin{equation}
    \|\cT_U^k[\Phi] \|_2 = |\lambda|^k \|\Phi \|_2.
\end{equation} 
The key point is that this quantity becomes unbounded if $|\lambda|>1$ in the $k\to \infty$ limit. This is a contradiction because, using triangle inequality, we get
\begin{equation}
\begin{aligned}
    \|\cT_U^k[\Phi] \|_{2} &\leq \sum_{i\in I} \|\alpha_i \cT_U^k[\Psi_i] \|_{2} \\
    &\leq \sum_{i\in I} |\alpha_i| \|\widetilde{\Psi}_{i} \|_2,
\end{aligned}
\label{eq:proof_temp}
\end{equation}
where $\widetilde{\Psi}_{i} \in \mathcal{CP}_{\cH}^{TP}$ because $\mathcal{T}_U^k$ maps  quantum channels to quantum channels. Importantly, the last line of Eq.~\eqref{eq:proof_temp} is bounded for all $k$ because the set of quantum channels itself is bounded. If $|\lambda|> 1$, there exists a finite $k\in \mathbb{N}$ such that
\begin{equation}
    |\lambda|^k \|\Phi\|_2 > \sum_{i\in I} |\alpha_i| \|\widetilde{\Psi}_{i} \|_2,
\end{equation}
which is a contradiction. Therefore, we conclude that $|\lambda| \leq 1$, completing the proof.
\end{proof}

From Theorem~\ref{thm:eigenvalue_bound}, we found that the eigenvalues of the right eigenvectors of $\cT_U$ must lie on $|\lambda| \leq 1$. Moreover, one can verify straightforwardly that the identity channel, denoted as $\mathcal{I}$, satisfies the following relation:
\begin{equation}
    \cT_U[\mathcal{I}] = \mathcal{I}.
\end{equation}
This means that $\mathcal{I}$ is a right eigenvector of $\cT_U$ with an eigenvalue $1$. Provided that this is the only right eigenvector whose eigenvalue has a modulus of $1$, the norm of the other right eigenvectors must be strictly less than $1$. In particular, after a repeated application of $\cT_U$, the norm in fact decays exponentially.

\begin{corollary}
Let $\mathcal{T}$ be a transfer operator. Suppose there is only one right eigenvector of $\cT$ that has an eigenvalue with modulus $1$. Let $\Phi$ be a CPTP map and define a sequence of CPTP maps as  $\Phi_0 := \Phi$ and $\Phi_i := \cT[\Phi_{i-1}]$. Then
\begin{equation}
    \|\Phi_k - \mathcal{I} \|_2 \leq \mathcal{O}(\Delta^k)\label{eq:corollary_exponential},
\end{equation}
where $0 < \Delta < 1$. Specifically, $\Delta$ is the modulus of the largest non-unit eigenvalue of $\mathcal{T}$.
\end{corollary}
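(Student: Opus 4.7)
The plan is to put $\cT$ into Jordan normal form on the finite-dimensional vector space $V:=\Seq{\mathcal{CP}_{\cH}^{TP}}$ and to exploit the spectral gap between the eigenvalue $1$ and the remaining eigenvalues. By Theorem~\ref{thm:eigenvalue_bound} every eigenvalue $\lambda$ of $\cT$ satisfies $|\lambda|\leq 1$, and by hypothesis the only one with $|\lambda|=1$ is $\lambda=1$, with eigenvector the identity channel $\mathcal{I}$.

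The key preliminary step is to rule out a nontrivial Jordan block at $\lambda=1$. If such a block existed, there would be a generalized eigenvector $\Psi\in V$ with $(\cT-\mathrm{Id}_V)\Psi\neq 0$ and $(\cT-\mathrm{Id}_V)^{2}\Psi=0$, and hence $\cT^{k}[\Psi]=\Psi+k\,(\cT-\mathrm{Id}_V)[\Psi]$ would grow linearly in $k$. Because CPTP maps span $V$ by definition, some CPTP map $\Phi'$ has a nonzero Jordan coefficient along $\Psi$, making $\|\cT^{k}[\Phi']\|_{2}$ unbounded as $k\to\infty$. This contradicts the fact that $\cT^{k}[\Phi']$ is itself CPTP and that the set of CPTP maps is bounded in $\|\cdot\|_{2}$. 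Consequently, $V=\mathbb{C}\cdot\mathcal{I}\oplus W$, where $W$ is the invariant complement obtained from the generalized eigenspaces at eigenvalues of modulus strictly less than $1$.

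I would then decompose $\Phi_{0}=c\,\mathcal{I}+\Phi^{\perp}$ with $\Phi^{\perp}\in W$ and pin down $c=1$ as follows: since $\cT|_{W}$ has spectral radius $\Delta<1$, $\cT^{k}[\Phi^{\perp}]\to 0$, so $\cT^{k}[\Phi_{0}]\to c\,\mathcal{I}$; every $\cT^{k}[\Phi_{0}]$ is trace-preserving, hence so is the limit, forcing $c=1$. This already gives $\Phi_{k}-\mathcal{I}=\cT^{k}[\Phi^{\perp}]$, and the exponential bound then follows from the Jordan structure of $\cT|_{W}$: a block of size $r$ at eigenvalue $\lambda$ with $|\lambda|\leq\Delta$ contributes entries of the form $\binom{k}{j}\lambda^{k-j}$ for $0\leq j\leq r-1$ to $\cT^{k}$, so $\|\cT^{k}[\Phi^{\perp}]\|_{2}\leq C\,k^{r_{\max}-1}\Delta^{k}$. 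The polynomial prefactor is absorbed into the $\cO(\Delta^{k})$ notation (one may, if desired, replace $\Delta$ by any number strictly between it and $1$ to obtain a clean exponential).

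The main obstacle is the Jordan-block exclusion at $\lambda=1$. Everything downstream is standard linear algebra, but that step is where one must invoke the quantum-mechanical structure---namely, that CPTP maps form a bounded spanning set of $V$ that is preserved by $\cT$. Once this rigidity is in place, the spectral-gap convergence proceeds without further difficulty.
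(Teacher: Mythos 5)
Your proof is correct and follows the same basic strategy as the paper's---split off the identity component of $\Phi$, pin its coefficient to $1$ using the fact that a limit of CPTP maps is trace-preserving, and let the spectral gap kill the remainder---but it is more careful in two respects. The paper decomposes $\Phi$ directly into a linear combination of genuine eigenvectors of $\cT$, which implicitly assumes $\cT$ is diagonalizable, and it asserts $\lim_{n\to\infty}\cT^{n}[\Phi]=P[\Phi]$ without justifying that the limit exists. You instead pass to the Jordan form and explicitly exclude a nontrivial Jordan block at $\lambda=1$---a case the hypothesis alone does not rule out, since a Jordan block still contributes only one eigenvector---by the same boundedness-of-the-CPTP-set mechanism used in Theorem~\ref{thm:eigenvalue_bound}. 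That step closes a genuine gap in the paper's argument. You also correctly note that Jordan blocks at subdominant eigenvalues introduce a polynomial prefactor $k^{r-1}\Delta^{k}$, so the bound with $\Delta$ taken to be exactly the modulus of the largest non-unit eigenvalue only holds verbatim in the diagonalizable case; otherwise one must enlarge $\Delta$ slightly or keep the prefactor, as you indicate. Both routes buy the same conclusion: the paper's version is the streamlined argument valid for generic (diagonalizable) transfer operators, while yours is the complete one.
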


\begin{proof}
Recall that $\cT[\cI] = \cI$. Therefore, if $\cT$ has only one right eigenvector, it must be $\cI$. Without loss of generality, 
\begin{equation}
    \cT = P + D,
\end{equation}
where $P$ is a projector onto the one-dimensional subspace spanned by $\cI$ and $D = \cT - P$. Let us first prove that $P[\Phi] = \mathcal{I}$. Since $\mathcal{T}_i$ maps a CPTP map to a CPTP map, 
\begin{equation}
    \lim_{n\to \infty} \mathcal{T}^n[\Phi] = P[\Phi]
\end{equation}
is also a CPTP map. However, the only multiple of the identity that is a CPTP map is the identity map itself. Therefore, $P[\Phi]= \mathcal{I}$. 

Without loss of generality, consider the decomposition of $\Phi$ into the linear combination of right eigenvectors of $\mathcal{T}$:
\begin{equation}
    \Phi = \sum_{i\in I}\alpha_i \Psi_i,
\end{equation}
where $\mathcal{T}[\Psi_i] = \lambda_i$. Since $P[\Phi] = \mathcal{I}$, 
\begin{equation}
    \Phi - \mathcal{I} = \sum_{\substack{i\in I, \\ |\lambda_i| < 1 }} \alpha_i \Psi_i.
\end{equation}

Using the fact that $\mathcal{T}[\mathcal{I}] = \mathcal{I}$,
\begin{equation}
\begin{aligned}
    \|\Phi_k -\mathcal{I} \|_2 &= \|\mathcal{T}^k[\Phi - I] \|_2 \\
    &= \|\sum_{\substack{i\in I,\\ |\lambda_i| < 1}} \alpha_i \Psi_i  \lambda_i^k \|_2 \\
    &\leq \Delta^k \sum_{\substack{i\in I, \\ |\lambda_i|<1}} |\alpha_i| \| \Psi_i \|_2
\end{aligned}
\end{equation}
Since we are working in a finite-dimensional vector space, each of the terms in the last line is bounded.
\end{proof}

Let us remark that the distance $\|\Phi_k -\mathcal{I} \|_2$ upper bounds a more operationally meaningful distance of $\| \Phi_k -\mathcal{I} \|_{\diamond}$ up to a dimension-dependent factor. Because the underlying Hilbert space is finite, the fact that this distance decays exponentially in $k$ remains the same. This completes the main claim in Section~\ref{sec:nrqc_proof}.

\section{Noise resilience}
\label{appendix:noise_resilience}
In this appendix, we formulate a sufficient condition under which the rewinding protocol becomes resilient to noise. The central concept is the notion of \emph{contractivity} of the transfer matrix. 
\begin{definition}
Let $\mathcal{T}$ be a linear map which maps a channel to another channel. $\mathcal{T}$ is contractive if 
\begin{equation}
    \|\mathcal{T}[\Phi_1 - \Phi_2] \|_2 \leq \gamma(\mathcal{T}) \| \Phi_1 - \Phi_2\|_2
\end{equation}
for some real number $\gamma(\mathcal{T}) < 1$. If such a $\gamma(\mathcal{T})$ exists, the smallest such number will be defined as the contraction coefficient of $\mathcal{T}$.
\label{definition:contraction}
\end{definition}

Below, we will show that if the transfer matrix in Eq.~\eqref{eq:transfer_matrix} is contractive, the rewinding protocol is resilient to noise. Specifically,  without loss of generality, consider the noisy version of the transfer matrix (cf. Eq.~\eqref{eq:transfer_matrix}):
\begin{equation}
    \mathcal{T}_{U, \delta}[\Phi] = \text{Tr}_B \circ \mathcal{U}^{\dagger}_{\delta} \circ (\mathcal{I}_A \otimes \Phi) \circ \mathcal{U}_{\delta} \circ \mathcal{E},
    \label{eq:transfer_matrix_noisy}
\end{equation}
where $\mathcal{U}$ and $\mathcal{U}^{\dagger}_{\delta}$ are noisy versions of $\mathcal{U}$ and $\mathcal{U}^{\dagger}$:
\begin{equation}
\begin{aligned}
    \|\mathcal{U}_{\delta} - \mathcal{U} \| &\leq \delta, \\
    \|\mathcal{U}_{\delta}^{\dagger} - \mathcal{U}^{\dagger} \| &\leq \delta.
\end{aligned}
\end{equation}
Our goal is to show that the fixed point of $\mathcal{T}_{U,\delta}$ is close to that of $\mathcal{T}_U$ up to a distance $\mathcal{O}(\delta/(1-\gamma))$.

For small $\delta$, $\mathcal{T}_{U,\delta}$ and $\mathcal{T}_U$ are close to each other, as we prove below.
\begin{lemma}
\label{lemma:bound_transfer_operator_noise}
\begin{equation}
    \|\mathcal{T}_{U,\delta} - \mathcal{T}_U\| \leq C\delta
\end{equation}
for some constant $C$.
\end{lemma}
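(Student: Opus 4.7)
The plan is to bound $\|(\mathcal{T}_{U,\delta} - \mathcal{T}_U)[\Phi]\|_2$ uniformly over $\Phi \in \Seq{\mathcal{CP}_{\cH}^{TP}}$ by isolating the noise corrections $\mathcal{U}_\delta - \mathcal{U}$ and $\mathcal{U}_\delta^\dagger - \mathcal{U}^\dagger$ from the rest of the expression. I would apply the telescoping identity $ab - cd = (a-c)b + c(b-d)$ to the two noisy factors in~\eqref{eq:transfer_matrix_noisy} relative to~\eqref{eq:transfer_matrix}, inserting the hybrid term $\Tr_B \circ \mathcal{U}^\dagger \circ (\mathcal{I}_A \otimes \Phi) \circ \mathcal{U}_\delta \circ \mathcal{E}$ to obtain
\begin{equation}
\begin{aligned}
(\mathcal{T}_{U,\delta} - \mathcal{T}_U)[\Phi]
&= \Tr_B \circ (\mathcal{U}_\delta^\dagger - \mathcal{U}^\dagger) \circ (\mathcal{I}_A \otimes \Phi) \circ \mathcal{U}_\delta \circ \mathcal{E} \\
&\quad + \Tr_B \circ \mathcal{U}^\dagger \circ (\mathcal{I}_A \otimes \Phi) \circ (\mathcal{U}_\delta - \mathcal{U}) \circ \mathcal{E}.
\end{aligned}
\end{equation}

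Next, I would apply the triangle inequality in $\|\cdot\|_2$ and use submultiplicativity of the operator norm under composition of superoperators to bound each summand by a product of factor norms. In each product, the critical middle factor---$\|\mathcal{U}_\delta^\dagger - \mathcal{U}^\dagger\|$ or $\|\mathcal{U}_\delta - \mathcal{U}\|$---is bounded by $\delta$ by hypothesis. The remaining outer factors $\|\Tr_B\|$, $\|\mathcal{E}\|$, $\|\mathcal{U}^\dagger\|$, and $\|\mathcal{U}_\delta\|$, viewed as maps on $\cB(\cH_A \otimes \cH_B)$ equipped with the Hilbert--Schmidt norm, are bounded by absolute constants depending only on the fixed local Hilbert space dimension: unitary channels have operator norm $1$, the partial trace and the CP ancilla-attachment $\mathcal{E}$ are bounded, and $\mathcal{U}_\delta$ is within $\delta$ of a unitary channel and hence bounded. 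The inner tensor-extended factor $\|\mathcal{I}_A \otimes \Phi\|$ is bounded by a dimensional constant times $\|\Phi\|_2$. Combining these estimates yields $\|(\mathcal{T}_{U,\delta} - \mathcal{T}_U)[\Phi]\|_2 \leq C\delta \|\Phi\|_2$ for all $\Phi$, and dividing by $\|\Phi\|_2$ and taking the supremum gives the claim.

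The main obstacle is bookkeeping: three distinct norms are in play---the Hilbert--Schmidt 2-norm on $\Seq{\mathcal{CP}_{\cH}^{TP}}$, the operator norm on linear maps between such spaces, and the norms on individual superoperators such as $\mathcal{U}_\delta$ (interpreted as maps on $\cB(\cH_A \otimes \cH_B)$ with the Hilbert--Schmidt norm). One must verify that the hypothesis $\|\mathcal{U}_\delta - \mathcal{U}\| \leq \delta$ composes cleanly with the submultiplicativity step at each point in the chain, and that the tensor extension $\mathcal{I}_A \otimes \Phi$ does not inflate the norm of $\Phi$ beyond a dimension-dependent constant. Because the local Hilbert space dimension is fixed, all these norm equivalences contribute only constants, which can be absorbed into the single $C$ in the statement; no deeper analytic ingredients are required.
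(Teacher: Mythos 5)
Your proposal is correct and follows essentially the same route as the paper: the identical telescoping decomposition with the hybrid term $\Tr_B \circ \mathcal{U}^\dagger \circ (\mathcal{I}_A \otimes \Phi) \circ \mathcal{U}_\delta \circ \mathcal{E}$, followed by the triangle inequality, submultiplicativity, the hypothesis $\|\mathcal{U}_\delta - \mathcal{U}\|, \|\mathcal{U}_\delta^\dagger - \mathcal{U}^\dagger\| \leq \delta$, and boundedness of the remaining factors in finite dimension. Your extra care with the factor $\|\mathcal{I}_A \otimes \Phi\| \lesssim \|\Phi\|_2$ is a slight tightening of a step the paper glosses over, but the argument is the same.
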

\begin{proof}
Note the following chain of inequalities.
\begin{widetext}
\begin{align}
\begin{aligned}
\| \mathcal{T}_{U,\delta} - \mathcal{T}_U\| &=  \sup_{\Phi \in \Seq{\mathcal{CP}_{\cH}^{TP}}} \frac{\|(\mathcal{T}_{U,\delta} - \mathcal{T}_U)[\Phi] \|_2}{\|\Phi \|_2} \\
& =   \sup_{\Phi \in \Seq{\mathcal{CP}_{\cH}^{TP}}} \frac{\|\text{Tr}_B\circ (\mathcal{U}_{\delta}^{\dagger} - \mathcal{U}^{\dagger}) \circ (\mathcal{I}_A \otimes \Phi) \circ \mathcal{U}_{\delta} \circ \mathcal{E} + \text{Tr}_B\circ \mathcal{U}^{\dagger} \circ (\mathcal{I}_A \otimes \Phi) \circ (\mathcal{U}_{\delta} - \mathcal{U}) \circ \mathcal{E}\|_2}{\| \Phi\|_2}
\\
&\leq  \sup_{\Phi \in \Seq{\mathcal{CP}_{\cH}^{TP}}} \frac{\|\text{Tr}_B\circ (\mathcal{U}_{\delta}^{\dagger} - \mathcal{U}^{\dagger}) \circ (\mathcal{I}_A \otimes \Phi) \circ \mathcal{U}_{\delta} \circ \mathcal{E} \|_2}{\| \Phi\|_2}
+ \sup_{\Phi \in \Seq{\mathcal{CP}_{\cH}^{TP}}} \frac{\|\text{Tr}_B\circ \mathcal{U}^{\dagger} \circ (\mathcal{I}_A \otimes \Phi) \circ (\mathcal{U}_{\delta} - \mathcal{U}) \circ \mathcal{E} \|_2}{\| \Phi\|_2} \\
&\leq \|\text{Tr}_B \|\mathcal{I}_A \| \|\mathcal{E} \| \left(\|\mathcal{U}_{\delta}^{\dagger} - \mathcal{U}^{\dagger}\| \|\mathcal{U}_{\delta}\| + \|\mathcal{U}_{\delta} - \mathcal{U} \| \|\mathcal{U}^{\dagger} \|  \right) \\
&\leq \delta \|\text{Tr}_B \|\mathcal{I}_A \| \|\mathcal{E} \| \left( \|\mathcal{U}_{\delta}\| + \|\mathcal{U}^{\dagger} \|  \right).
\end{aligned}
\end{align}
\end{widetext}
Since the operators placed in the norm in the last line are bounded operators acting on a finite-dimensional vector space, their norms are bounded.
\end{proof}

\begin{prop}
Consider a sequence $\mathcal{T}_1, \mathcal{T}_2,\ldots, \mathcal{T}_k$ of transfer operators, each potentially defined by a different unitary transformation. Suppose $\mathcal{T}_i$ has a contraction coefficient of $\gamma_i$ and let $\gamma = \max_i \gamma_i$.

For any sequence of operators $\widetilde{\mathcal{T}}_1, \widetilde{\mathcal{T}}_2, \ldots, \widetilde{\mathcal{T}}_k$ which map channels to channels such that $\|\mathcal{T}_i - \widetilde{\mathcal{T}}_{i} \|\leq \delta$ for all $i \in \{1,\ldots, k\}$,
\begin{equation}
    \|(\widetilde{\mathcal{T}}_k \circ \ldots \circ \widetilde{\mathcal{T}}_1 -  \mathcal{T}_k \circ \ldots \circ \mathcal{T}_1)[\Phi ] \|_2 \leq \frac{C'\delta}{1-\gamma}
\end{equation}
for some constant $C'>0$.
\end{prop}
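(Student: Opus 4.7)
The plan is a telescoping argument that propagates contractivity step by step along the composition. I would first decompose the difference of the two ordered products as
\begin{equation}
\widetilde{\mathcal{T}}_k \circ \cdots \circ \widetilde{\mathcal{T}}_1 - \mathcal{T}_k \circ \cdots \circ \mathcal{T}_1 = \sum_{j=1}^{k} \mathcal{T}_k \circ \cdots \circ \mathcal{T}_{j+1} \circ (\widetilde{\mathcal{T}}_j - \mathcal{T}_j) \circ \widetilde{\mathcal{T}}_{j-1} \circ \cdots \circ \widetilde{\mathcal{T}}_1,
\end{equation}
with empty products interpreted as the identity. Evaluated at $\Phi$, the $j$-th summand takes the form $\mathcal{T}_k \circ \cdots \circ \mathcal{T}_{j+1}[\eta_j]$, where $\eta_j := (\widetilde{\mathcal{T}}_j - \mathcal{T}_j)[\Phi_{j-1}]$ and $\Phi_{j-1} := \widetilde{\mathcal{T}}_{j-1} \circ \cdots \circ \widetilde{\mathcal{T}}_1[\Phi]$ is itself a CPTP map, since by hypothesis each $\widetilde{\mathcal{T}}_i$ sends channels to channels.

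Next, I would control each summand in two stages. For the innermost piece, combining $\|\widetilde{\mathcal{T}}_j - \mathcal{T}_j\| \leq \delta$ with a uniform bound $\|\Phi_{j-1}\|_2 \leq C'$---which exists because CPTP maps form a bounded subset of the finite-dimensional space $\Seq{\mathcal{CP}_{\cH}^{TP}}$, the very fact exploited in Lemma~\ref{lemma:bound_transfer_operator_noise}---yields $\|\eta_j\|_2 \leq C'\delta$. The key observation is that $\eta_j = \widetilde{\mathcal{T}}_j[\Phi_{j-1}] - \mathcal{T}_j[\Phi_{j-1}]$ is a difference of two CPTP maps, so Definition~\ref{definition:contraction} applies and gives $\|\mathcal{T}_{j+1}[\eta_j]\|_2 \leq \gamma_{j+1}\|\eta_j\|_2 \leq \gamma\, C'\delta$. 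Moreover, $\mathcal{T}_{j+1}[\eta_j]$ is once again the difference of two channels (the images of $\widetilde{\mathcal{T}}_j[\Phi_{j-1}]$ and $\mathcal{T}_j[\Phi_{j-1}]$ under the channel-preserving $\mathcal{T}_{j+1}$), so contractivity is applicable when $\mathcal{T}_{j+2}$ is applied. Iterating this step $k-j$ times gives $\|\mathcal{T}_k \circ \cdots \circ \mathcal{T}_{j+1}[\eta_j]\|_2 \leq \gamma^{k-j} C'\delta$.

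Finally, I would apply the triangle inequality to the telescoping sum and sum the geometric series to obtain
\begin{equation}
\|(\widetilde{\mathcal{T}}_k \circ \cdots \circ \widetilde{\mathcal{T}}_1 - \mathcal{T}_k \circ \cdots \circ \mathcal{T}_1)[\Phi]\|_2 \leq C'\delta \sum_{j=1}^{k}\gamma^{k-j} \leq \frac{C'\delta}{1-\gamma},
\end{equation}
which is the claimed bound. The main obstacle is the inductive propagation of contractivity: Definition~\ref{definition:contraction} only controls $\mathcal{T}[\Phi_1 - \Phi_2]$ when $\Phi_1, \Phi_2$ are CPTP, so the argument hinges on verifying that after every intermediate application of $\mathcal{T}_{j+1}, \ldots, \mathcal{T}_k$ the quantity in hand is still expressible as a difference of two channels. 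This is guaranteed because each $\mathcal{T}_i$ and $\widetilde{\mathcal{T}}_i$ maps channels to channels, so the contraction factor $\gamma$ really can be picked up at every link of the composition; everything else is finite-dimensional bookkeeping.
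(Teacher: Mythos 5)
Your proof is correct and follows essentially the same route as the paper's: the same telescoping decomposition, the same observation that the intermediate $\Phi_{j-1}$ is a channel so the perturbation term is a bounded difference of two channels, and the same iterated use of contractivity followed by summing the geometric series. If anything, your version is slightly more careful than the paper's --- your sum correctly runs to $j=k$ (the paper's upper limit of $k-1$ appears to be an off-by-one slip) and you make explicit why contractivity can be applied at every subsequent link, a point the paper leaves implicit.
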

\begin{proof}
Consider the telescopic decomposition of $\widetilde{\mathcal{T}}_k \circ \ldots \circ \widetilde{\mathcal{T}}_1 -  \mathcal{T}_k \circ \ldots \circ \mathcal{T}_1$:
\begin{equation}
\begin{aligned}
    &\widetilde{\mathcal{T}}_k \circ \ldots \circ \widetilde{\mathcal{T}}_1 -  \mathcal{T}_k \circ \ldots \circ \mathcal{T}_1 \\ &= \sum_{i=1}^{k-1} \mathcal{T}_{[i+1,k]}\circ( \widetilde{\mathcal{T}}_i - \mathcal{T}_i)\circ \widetilde{\mathcal{T}}_{[0, i-1]},
\end{aligned}
\end{equation}
where $\mathcal{T}_{[i,j]}$ and $\widetilde{\mathcal{T}}_{[i,j]}$ is a sequential application of $\mathcal{T}_{k}$ for $k$ from $i$ to $j$. Here we assume that $j\geq i$ and also that $\mathcal{T}_0$ is an identity operation, returning every channel to itself.

Therefore, 
\begin{equation}
\begin{aligned}
&\|    \sum_{i=1}^{k-1} \mathcal{T}_{[i+1,k]}\circ(\widetilde{\mathcal{T}}_i - \mathcal{T}_i)\circ \widetilde{\mathcal{T}}_{[0, i-1]}[\Phi] \|_2\\
&\leq 
\sum_{i=1}^{k-1} \mathcal{T}_{[i+1,k]}\circ(\widetilde{\mathcal{T}}_i - \mathcal{T}_i)[\Phi_i] \|_2
\end{aligned}
\end{equation}
for some channel $\Phi_i$. Note that
\begin{equation}
    \|\widetilde{\mathcal{T}}_i[\Phi_i] - \mathcal{T}_i[\Phi_i]\| \leq C'\delta
\end{equation}
for some constant $C'$ and that both $\widetilde{\mathcal{T}}_i[\Phi_i]$ and $\mathcal{T}_i[\Phi_i]$ are channels. Using the definition of the contraction coefficient, (see Definition~\ref{definition:contraction}) we obtain the following bound:
\begin{equation}
    \begin{aligned}
    &\|    \sum_{i=1}^{k-1} \mathcal{T}_{[i+1,k]}\circ(\widetilde{\mathcal{T}}_i - \mathcal{T}_i)\circ \widetilde{\mathcal{T}}_{[0, i-1]}[\Phi] \|_2\\
    &\leq C'\delta \sum_{i=1}^{k-1} \gamma^{k-i-1} \\
    &\leq \frac{C'\delta}{1-\gamma}
    \end{aligned}
\end{equation}
\end{proof}

\section{Examples of convolutional circuits}
\label{sec:convolutional_circuits_examples}
Convolutional circuits appear in many situations, including the encoding circuits for convolutional quantum codes~\cite{poulin2009quantum}. We focus on three examples: matrix product states~\cite{Schon2005,Barratt2020}, holographic quantum circuits~\cite{Kim2017,Kim2017a}, and deep multi-scale entanglement renormalization ansatz (DMERA)~\cite{Vidal2007,Evenbly2009,Kim2017b}.

\subsection{Matrix product states}
\label{sec:mps}
Matrix product state (MPS) is a many-body quantum state which can be expressed in the following form:
\begin{equation}
    |\Psi\rangle = \sum_{\{s \}} \text{Tr}[A_1^{(s_1)} A_2^{(s_2)} \cdots A_n^{(s_n)}] |s_1 s_2\ldots s_n\rangle,\label{eq:mps}
\end{equation}
where $\{ A_i^{(s_i)}\}$ is a set of matrices of dimension $\chi$ and $s_i \in \{ 0,\ldots, d-1\}$. Here $\chi$ is the dimension of the \emph{virtual space} $\mathcal{H}_v$ of the MPS and $d$ is the local Hilbert space dimension of the constituent particles, which has a tensor product form $\otimes_i \mathcal{H}_i$ where $\dim (\mathcal{H}_i)= d$. Formally, one can view $A_i^{(s_i)}$ as a matrix $A_i$ whose domain and image is $\mathcal{H}_v$ and $\mathcal{H}_v\otimes \mathcal{H}_i$. Specifically,
\begin{equation}
    A_i |\alpha\rangle =\sum_{\beta, s_i} (A_i)_{\beta \alpha}^{(s_i)}|\beta\rangle |s_i\rangle,
\end{equation}
where $\alpha, \beta \in \{0, \ldots, \kappa-1 \}$ and $s_i \in \{0,\ldots, d-1\}$.

One can consider the following special form of MPS:
\begin{equation}
    |\Psi\rangle = \sum_{\{s \}} \langle \phi_F | A_1^{(s_1)} A_2^{(s_2)} \cdots A_n^{(s_n)} |\phi_I\rangle |s_1 s_2\ldots s_n\rangle,\label{eq:mps_special}
\end{equation}
where $|\phi_I\rangle$ and $|\phi_F\rangle$ are $\kappa$-dimensional vectors. While a class of states in the form of Eq.~\eqref{eq:mps_special} forms a subclass of Eq.~\eqref{eq:mps}, this subclass is still capable of describing many physical states of interest, such as the W-state, cluster state, and GHZ state~\cite{Schon2005}. 

While the matrices appearing in Eq.~\eqref{eq:mps_special} can be arbitrary, up to normalization, such states can be reexpressed as
\begin{equation}
    |\Psi\rangle \otimes |\phi_F\rangle = \sum_{\{s \}} V_1^{(s_1)} V_2^{(s_2)} \cdots V_n^{(s_n)} |\phi_I\rangle |s_1 s_2\ldots s_n\rangle,
\end{equation}
where $V_i^{(s_i)}$ is an isometry from the virtual Hilbert space to the tensor product of the virtual Hilbert space and the local Hilbert space. Specifically, the action of this isometry, acting on a basis state $|\alpha\rangle$ in the virtual Hilbert space and $|s_i\rangle$ in the local Hilbert space, acts as 
\begin{equation}
    V_i|\alpha\rangle|s_i\rangle  = \sum_{\beta, s_i} (V_i)_{\beta \alpha}^{(s_i)} |\beta\rangle |s_i\rangle.
\end{equation}

Since $V_i$ is an isometry from $\mathcal{H}_v$ to $\mathcal{H}_v \otimes \mathcal{H}_i$, one can rewrite it as
\begin{equation}
    V_i = U_i(|0\rangle \otimes I_v),
\end{equation}
for some unitary acting on $\mathcal{H}_v \otimes \mathcal{H}_i$, where $|0\rangle \in \mathcal{H}_i$ and $I_v$ is the identity operator acting on $\mathcal{H}_v$. Viewed this way, the state in Eq.~\eqref{eq:mps_special} can be created by first preparing the following state:
\begin{equation}
    |\phi_I\rangle \otimes \underbrace{|0\rangle \otimes \ldots \otimes |0\rangle}_{n \text{ copies}}
\end{equation}
and applying $U_n$ to $U_1$ sequentially. 

The resulting circuit, up to a relabeling, has a convolutional structure. To see why, it is convenient to view the virtual Hilbert space as a $\lceil \log_d \chi \rceil$-qudit system. The key idea is to convert the following tensor representation of $U_i$ 
\begin{equation}
    \Qcircuit @C=1em @R=.7em {
\mathcal{H}_i& & \multigate{1}{U_i} & \qw & \mathcal{H}_i  \\ \mathcal{H}_v& & \ghost{U_i}& \qw & \mathcal{H}_v
}\label{eq:mps_unitary}
\end{equation}
to the following form.
\begin{equation}
    \Qcircuit @C=1em @R=.7em {
\mathcal{H}_i& & \multigate{1}{\widetilde{U}_i} & \qw & \mathcal{H}_v  \\ \mathcal{H}_v& & \ghost{\widetilde{U}_i}& \qw & \mathcal{H}_i
}\label{eq:mps_unitary_flipped}
\end{equation}
Formally, $\widetilde{U}_i$ is $U_i$ followed by this map:
\begin{equation}
    |\alpha\rangle |s_i\rangle \to |s_i\rangle |\alpha\rangle
\end{equation}
for all $\alpha \in \{0,\ldots, \kappa-1\}$ and $s_i \in \{0,\ldots, d-1\}$. Since this is a permutation of qudits, it can be realized unitarily. Therefore, $\widetilde{U}_i$ is simply a unitary acting on $\lceil \log_d \chi \rceil +1$ qudits.

Using $\widetilde{U}_i$, the convolutional structure of MPS becomes evident. We provide the following example, which is a MPS over $3$ qudits.
\begin{equation}
    \Qcircuit @C=1em @R=.7em {
        |0\rangle &  & \qw & \qw & \multigate{1}{\widetilde{U}_1} &\qw & |\phi_F\rangle\\
    |0\rangle &  & \qw & \multigate{1}{\widetilde{U}_2} &\ghost{\widetilde{U}_1} &\qw & \\
|0\rangle& & \multigate{1}{\widetilde{U}_3} & \ghost{\widetilde{U}_2}&\qw &\qw & \\ 
|\phi_I\rangle&  & \ghost{\widetilde{U}_3}& \qw &\qw &\qw &} \label{eq:mps_example}
\end{equation}
Generalization to $n$-qudit MPS is straightforward.

\subsection{Holographic quantum circuits}
\label{sec:holographic}

Holographic quantum circuits~\cite{Kim2017,Kim2017a,Borregaard2019,Foss-Feig2020} generalize MPS to higher dimensions by replacing a qudit with a \emph{collection} of qudits and imposing a spatial structure within $U_i$. These circuits can be used as ansatz to simulate a $\mathcal{D}$-dimensional quantum many-body systems by applying a sequence of local gates on a set of qudits arranged on a $(\mathcal{D}-1)$-dimensional lattice. 

For concreteness, let us describe a $\mathcal{D}=2$-dimensional example. Without loss of generality, suppose the physical system of interest consists of qudits with local Hilbert space dimension $d$, arranged on a square lattice of size $\ell_x \times \ell_y$. Each qudit shall be specified by its coordinate $(x, y)$, where $x\in [\ell_x] = \{0,\ldots, \ell_x-1 \}$ and $y\in [\ell_y]=\{0,\ldots, \ell_y-1 \}$. To describe our circuit, we will also need to introduce a total of $\ell_y$ ``virtual'' qudits, each of which has local Hilbert space dimension of $\chi$. We shall specify them with an index $y\in [\ell_y]$.

Just like in the MPS example, a holographic quantum circuit consists of a sequence of unitaries $\{ U_i\}$ that act on the virtual qudits and a set of qudits at $x=i$. The state is prepared by first initializing all the qudits as follows
\begin{equation}
    \overbrace{\left(\bigotimes_{y=0}^{\ell_y-1}|0\rangle_y\right)}
    ^{\text{Virtual qudits}} \bigotimes \overbrace{\underbrace{\left( \bigotimes_{y=0}^{\ell_y-1}\right) \bigotimes \ldots \bigotimes  \left( \bigotimes_{y=0}^{\ell_y-1}\right)}_{\ell_x \text{copies}}}^{\text{Physical qudits}}
\end{equation}
and then applying $U_i$ from $i=\ell_x - 1$ to $0$.

Importantly, $U_i$ is a \emph{finite-depth local quantum circuit}, meaning that it is a $\mathcal{O}(1)$-depth quantum circuit consisting of gates that act on qudits whose $y$-coordinates are close to each other. An example of such $U_i$ for a depth of $3$ and $\ell_y=3$ is described below.
\begin{equation}
\Qcircuit @C=1em @R=.7em 
{
v:0 & &\qw &\multigate{1}{} &\qw &\multigate{1}{} & \qw \\
p:0 & &\qw &\ghost{} & \multigate{1}{} & \ghost{} &\qw \\
v:1 & &\qw &\multigate{1}{} & \ghost{} &  \multigate{1}{} & \qw\\
p:1 & &\qw &\ghost{} & \multigate{1}{} & \ghost{} & \qw\\
v:2 & &\qw &\multigate{1}{} & \ghost{} & \multigate{1}{} & \qw\\
p:2 & &\qw &\ghost{} &\qw &\ghost{} & \qw
}
\end{equation}
Here $v:y$ is the virtual qudit labeled by $y$ and $p:y$ is the physical qudit labeled by $(i, y)$.

To restore the convolutional structure, we can swap each virtual qudit with a physical qudit with the same $y$-coordinate, similar to how we did in Eqs.~\eqref{eq:mps_unitary} and~\eqref{eq:mps_unitary_flipped}.

\subsection{DMERA}
\label{sec:dmera}
DMERA is a multi-scale ansatz that can well-approximate many-body wavefunctions of scale-invariant  quantum many-body systems~\cite{Vidal2007,Kim2017b}. This ansatz can be prepared in a recursive manner. Without loss of generality, suppose we are interested in simulating a scale-invariant system in $\mathcal{D}$ spatial dimensions. The DMERA ansatz over $2^{n\mathcal{D}}$ qudits can be constructed by first preparing an ansatz over the qudits arranged on a lattice of size $\underbrace{2^{n-1} \times \ldots \times 2^{n-1}}_{\mathcal{D}}$, embedding the constituent qudits into a lattice of size $\underbrace{2^{n} \times \ldots \times 2^{n}}_{\mathcal{D}}$ with a lattice spacing of $(\underbrace{2,\ldots, 2}_{\mathcal{D}})$, filling the remaining lattice sites by qudits initialized in the $|0\rangle$ state, and then applying a finite-depth local quantum circuit. Without loss of generality, we will assume that the depths of these local circuits are bounded by $D$.

For the purpose of estimating the expectation value of a local observable $O$, one can remove some of the spurious gates that do not affect its expectation value. For instance, suppose the DMERA ansatz $|\Psi\rangle$ can be represented as $u|\Psi'\rangle$ for some gate $u$. If $u$ commutes with $O$, then the following identity holds:
\begin{equation}
    \langle \Psi| O |\Psi\rangle = \langle \Psi'| O |\Psi'\rangle.
\end{equation}
By applying this identity recursively, one can remove gates used in preparing $|\Psi\rangle$ until this procedure no longer yields a smaller circuit. The resulting minimal circuit is referred to as the \emph{past causal cone} of the observable $O$~\cite{Evenbly2009}.

Once the past causal cone is obtained, the expectation value of $O$ can be expressed as follows:
\begin{equation}
    \langle O\rangle = \text{Tr}[ \Phi_n\circ \cdots \circ \Phi_1(\rho) O ],\label{eq:transfer_operator_dmera}
\end{equation}
where $\Phi_i$ is a completely-positive trace preserving(CPTP) map called the \emph{transfer operator} acting on $\mathcal{O}(D^{\mathcal{D}})$ qudits, whose exact form can be deduced from the DMERA circuit; see Ref.~\cite{Evenbly2009,Kim2017b} for details. 

By the Stinespring dilation theorem~\cite{Stinespring}, every CPTP map can be viewed as an isometry followed by a partial trace. Moreover, this isometry can be directly deduced from the DMERA circuit, as explained in Ref.~\cite{Evenbly2009,Kim2017b}. Therefore, Eq.~\eqref{eq:transfer_operator_dmera} can be viewed as an expectation value of $O$ evaluated with respect to a state created by applying a sequence of isometries $\{ V_i\} $ which are the dilations of the CPTP maps $\{\Phi_i \}$. Then, by the analysis of Appendix~\ref{sec:mps}, the expectation value of $O$ can be viewed as an expectation value evaluated with respect to a state prepared by a convolutional circuit.

\end{document}